\newtheorem{theorem}{Theorem}
\newtheorem{assumption}{Assumption}
\newcommand{\norm}[1]{\left\lVert#1\right\rVert}
\newcommand{\Ber}{\text{Bernoulli}}
\newcommand{\logit}{\text{logit}}
\newcommand{\E}{\mathbb{E}}
\title{Efficient estimation of longitudinal treatment effects using difference-in-differences and machine learning}
\author{Nicholas Illenberger, Iván D\'iaz, Audrey Renson\\[2ex]
\small Department of Population Health, New York University Grossman School of Medicine, New York, United States\\[2ex]
\small Contact: nicholas.illenberger@nyulangone.org}
\date{
    \today
}
\begin{document}

\maketitle

\begin{abstract}
Difference-in-differences is based on a parallel trends assumption, which states that changes over time in average potential outcomes are independent of treatment assignment, possibly conditional on covariates. With time-varying treatments, parallel trends assumptions can identify many types of parameters, but most work has focused on group-time average treatment effects and similar parameters conditional on the treatment trajectory. This paper focuses instead on identification and estimation of the intervention-specific mean – the mean potential outcome had everyone been exposed to a proposed intervention – which may be directly policy-relevant in some settings. Previous estimators for this parameter under parallel trends have relied on correctly-specified parametric models, which may be difficult to guarantee in applications. We develop multiply-robust and efficient estimators of the intervention-specific mean based on the efficient influence function, and derive conditions under which data-adaptive machine learning methods can be used to relax modelling assumptions. Our approach allows the parallel trends assumption to be conditional on the history of time-varying covariates, thus allowing for adjustment for time-varying covariates possibly impacted by prior treatments. Simulation results support the use of the proposed methods at modest sample sizes. As an example, we estimate the effect of a hypothetical federal minimum wage increase on self-rated health in the US.
\end{abstract}

\section{Introduction}
Difference-in-differences (DID) methods are popular in policy analysis because they can identify effects in the presence of unmeasured confounding, which is often expected in comparisons across jurisdictions. DID relies on parallel trends assumptions, which state that time trends in average potential outcomes are independent of treatment assignment (possibly conditional on covariates). This is in contrast to much of the causal inference literature, where identification is based on exchangeability assumptions; these state that levels (rather than trends) of potential outcomes are independent of treatment assignment (possibly conditional on covariates). Parallel trends can (but does not necessarily) hold in the presence of unmeasured common causes of the outcome and treatment (i.e., unmeasured confounding) whereas exchangeability cannot. For example, when treatment only changes once (i.e. is time-invariant), parallel trends holds if and only if there is a constant unmeasured confounding effect pre- and post-treatment \citep{sofer2016negative}. The simluation data-generating mechanism in Section \ref{sec:sim} of this paper provides an example of how parallel trends may hold under unmeasured confounding in a more complicated time-varying scenario.
\par Until recently, most methodological analyses of DID have considered only point treatments \citep{abadie2005semiparametric, heckman1997matching}, but realistic settings often have time-varying treatments, making identification and estimation more complex. Time-varying treatments are usually accomodated in the applied literature via two-way fixed effects (TWFE) models, which rely on strong parametric assumptions and/or may make illogical comparisons, resulting in potentially large bias \citep{goodman2021difference}. More recently, methods have been developed that relax these assumptions by considering DID for time-varying treatments in a non-parametric or semi-parametric framework \citep{callaway2021difference, de2022difference, renson2022identifying, shahn2022structural}. These methods are increasingly used in practice and can produce markedly different results than TWFE \citep{goin2023comparing}. 
\par In DID with time-invariant binary treatments, there is typically only one parameter of interest, the average treatment effect in the treated (ATT). However, with time-varying and/or many-valued treatments, there is not necessarily an obvious generalization of the ATT since there may be many observed treatment trajectories; therefore, many different causal parameters can potentially be of interest. The recent DID literature has focused primarily on parameters built on the ATT concept, such as the group-time average treatment effects \citep{callaway2021difference} or cohort average treatment effects \citep{sun2021estimating}; like the ATT, these parameters condition on an observed treatment trajectory and apply mainly to binary treatments. A different parameter which may also be of interest and can be identified in time-varying DID settings is the intervention-specific mean, or the the average potential outcome for the whole population under a specified treatment plan \citep{renson2022identifying}. Intervention-specific means have been the major focus in the literature on  longitudinal treatment effects under sequential exchangeability assumptions, which propose that potential outcomes are independent of a time-varying treatment given the history of covariates and treatments. Intervention-specific means have received less attention in the DID literature, but may be more policy-relevant in some settings. For example, \cite{renson2022identifying} used a DID framework to estimate counterfactual US mortality rates had all US states remained longer under pandemic stay-at-home orders, such a parameter may be directly applicable to pandemic planning decisions at the federal level. 
\par One challenge with time-varying treatments is that there are likely to be time-varying covariates, which may be necessary to condition on in order for the parallel trends assumption to hold. Time-varying covariates in the parallel trends assumption have received limited attention in the DID literature \citep{de2022difference, caetano2022difference, renson2022identifying, shahn2022structural}. When the parallel trends assumption depends on time-varying covariates, the standard approach in applied literature has been TWFE models with covariates entered as regressors. When analyzing these models, it is generally assumed (in addition to the other strong parametric assumptions) that covariates are not affected by past treatment \citep[e.g.][]{zeldow2021confounding,goodman2021difference}; such covariates are termed ``bad controls'' in this literature \citep{angrist2009mostly}. Analyses of DID models that relax the TWFE assumptions have generally also ruled out feedback between treatment and covariates \citep{caetano2022difference, ghanem2022selection}.  However, such treatment-covariate feedback is likely to be common in many DID applications; \cite{callaway2023policy} and \cite{renson2022identifying} both provide examples.
\par Previously developed estimators in time-varying DID settings have generally required that at least one of either the propensity score or outcome regression are correctly-specified parametric models  \citep{callaway2021difference, renson2022identifying}. Such models may be high-dimensional and difficult to specify correctly in practice, especially when estimation depends on the history of time-varying covariates. Estimators that allow for appropriate inference when data-adaptive models are used to estimate nuisance parameters have been developed primarily for settings where identification is based on exchangeability \citep[e.g.][]{diaz2023nonparametric}. Such an approach has been extended to estimating the ATT in a time-invariant DID setting \citep{chang2020double}. The extension of this approach to a time-varying DID setting has, to our knowledge, only been considered in the context of g-estimation of structural nested models \cite{shahn2022structural}. Though this approach allows estimation of the intervention-specific mean and other marginal parameters, the procedure is quite complex and its main strength is in studying time-varying effect heterogeneity.  
\par Building on the identification results of \cite{renson2022identifying}, we develop estimators for the intervention-specific mean under parallel trends based on a nonparametric statistical model. Our estimators are based on the efficient influence function for the identified parameter, and thus are nonparametrically asymptotically efficient, multiply robust, and can accomodate data-adaptive machine learning to estimate the propensity scores and/or outcome regression functions. Our approach depends on a parallel trends assumption that conditions on the history of time-varying covariates up to each treatment time point, and does not require an assumption that covariates are unaffected by prior treatment. Though our focus is on longitudinal settings with repeated observations among units, our results could be extended to repeated cross-sectional data as well. The remainder of this paper is organized as follows. Section 2 presents identification results and estimators based on the efficient influence function, Section 3 presents a simulation study exploring finite-sample properties of the proposed estimators, Section 4 presents a worked example studying the effect of minimum wages on self-rated health, and Section 5 concludes.
\section{Methods}
\subsection{Data}
Suppose we collect data on $i=1,...,n$ participants at $t=0,..., \tau$ discrete timepoints. For participant $i$ at timepoint $t$, we observe a discrete exposure variable $A_{it}$, a set of time-varying covariates $X_{it}$, and the outcome of interest $Y_{it}$. The full collection of observed data takes the form of $n$ i.i.d. copies of $O_i = (X_{i0}, A_{i0}, Y_{i0}, X_{i1}, ..., X_{i\tau}, A_{i\tau}, Y_{i\tau})$. Let $\overline{A}_t = (A_0, A_1,...,A_t)$ denote the treatment history up to and including time $t$, and let $\overline A\equiv\overline A_{\tau}$ denote the entire history. We define $\overline{X}_t$ and $\overline{Y}_t$ similarly. For convenience, we define $Z_t=\{\emptyset\}$ when $t<0$ for any random variable $Z$.   We assume that data come from a staggered discontinuation design \citep{renson2022identifying} such that $A_0=a_0^*$ for all units, where $a_0^*$ is the value that $A_0$ takes under the counterfactual treatment trajectory to which the target causal effect refers (defined in the next section). In other words, we assume all units begin follow-up with treatment value consistent with the counterfactual regime of interest, but may later discontinue. Note that this design includes as a special case staggered adoption designs, often the focus of time-varying DID studies \citep{callaway2021difference}, where $A_t\in\{0,1\}$, $A_0=0$ for everyone, and $A_t=1$ implies $A_{t+1}=1$ with probability 1, but also includes settings where treatments switch on and off and/or are non-binary.

\subsection{Target causal effects}
We will work within the potential outcomes framework, wherein $Y_t(\overline a_t)$ represents the value of $Y_t$ that would be observed if an individual's exposure history had been set by intervention to $\overline a_t$. Note that this potential outcome is only observed for (at most) those participants whose observed treatment pattern follows $\overline A_t=\overline a_t$. 

We will focus on causal quantities of the form $\mu_t = \E[Y_t(\overline a^*)]$, where $\overline a^*$ represents a counterfactual trajectory of interest. Thus, the quantity $\mu_t$ is an ``intervention-specific mean''; i.e., an average potential outcome, had everyone's exposure trajectory $\overline A$ been set by intervention to $\overline a^*$. Typically, counterfactual outcomes are compared to observed outcomes in some way, so that the target causal quantity is actually $\E[Y_t - Y_t(\overline a^*)]$. However, for such comparisons, estimation and inference are typically straightforward once these are achieved for the counterfactual component $\mu_t$; hence, the focus of this article is the latter.   Throughout, we assume there is only one counterfactual trajectory $\overline a^*$ of interest; the methods presented in this article also apply for multiple trajectories, but assuming parallel trends for multiple trajectories places restrictions on treatment effect homogeneity which may not be desirable \citep{renson2022identifying, callaway2021continuous}. 

\subsection{Identification}
Because potential outcomes under a given regime are unobserved, certain assumptions are required to connect the distribution of these potential outcomes to that of the observable data. In our context, identification of the causal effect of $A$ on $Y$ will rely on a generalization of the parallel trends assumption to a longitudinal setting. Specifically, we require the following identification assumptions:
\begin{assumption}[No interference] \label{asn:noint}
$Y_{it}(\overline a_{i}^*, \overline a_{i'}^*)=Y_{it}(\overline a_i)^*$ for $i\neq i'$
\end{assumption}
\begin{assumption}[Treatment version irrelevance] \label{asn:tvi}
If $\overline A_{it} = \overline a^*_t$, then $Y_{it} = Y_{it}(\overline a^*_t)$ 
\end{assumption}
\begin{assumption}[No anticipation] \label{asn:noant}
$Y_{it}(\overline a^*_s)=Y_{it}(\overline a_t^*)$ for $s>t$
\end{assumption}
\begin{assumption}[Parallel trends]\label{asn:pt}
    For $t \in \{1,2,...,\tau\}$ and $k\in \{0, 1,..., t-1\}$, assume there is a $\overline  W_k\subseteq(\overline X_k, \overline Y_{k-2})$ such that
    \[
    E[Y_t(\overline{a}^*) - Y_{t-1}(\overline{a}^*)|\overline W_k, \overline{A}_{k-1} = \overline{a}_{k-1}^*] = E[Y_t(\overline{a}^*) - Y_{t-1}(\overline{a}^*)|\overline W_k, \overline{A}_{k} = \overline{a}_{k}^*]
    \]
    where $\overline W_{k-1} \subseteq \overline W_k$.
\end{assumption}
\begin{assumption}[Positivity]\label{asn:pos}
If $f(w_t|\overline{w}_{t-1}, \overline{a}_{t-1}^*) > 0$, then $f(\overline{a}_t^*|\overline{w}_t, \overline{a}_{t-1}^*) > 0$ for $\overline{w}_t \in \overline{\mathcal{W}}_t$, $t\in\{0,1,...,\tau\}$
\end{assumption}
Assumptions \ref{asn:noint}-\ref{asn:noant} are standard in the longitudinal DID literature; we briefly summarize these here. No interference requires that outcomes of one unit are not affected by the treatment a separate unit receives (i.e. the potential outcome only depends on that unit's observed treatment). Treatment version irrelevance requires that exposures are defined and measured with sufficient granularity that, if a given unit is observed under the treatment pattern $\overline{a}^*$, then their observed outcome is equal to their potential outcome under an intervention to set $\overline A=\overline a^*$. Assumptions \ref{asn:noint} and \ref{asn:tvi} are sometimes combined as the Stable Unit Treatment Value Assumption. No anticipation requires that treatments observed at a time $s>t$ after outcomes $Y_{it}$ did not have the ability to impact that prior outcome; for example, this may be violated if units altered their behavior in anticipation of planned policy changes. 
\par The parallel trends assumption (Assumption \ref{asn:pt}) requires that there is some subset of the covariate history (up to time $k$) and the outcome history (up to time $k-2$), conditional on which the average changes in potential outcomes on the additive scale over successive periods are equal across values of the current treatment. (Note that if $Y_{k-1}$ were included, then parallel trends would reduce to a sequential exchangeability assumption and DID methods would not be necessary.) Assumption \ref{asn:pt} is identical to the parallel trends assumption previously considered in the context of the intervention-specific mean \citep{renson2022identifying}. It differs from parallel trends assumptions considered in other popular longitudinal DID methods in that it conditions on the history of time-varying covariates; for example, the methods of \cite{callaway2021difference} condition only on baseline covariates, those of \cite{de2022difference} condition on contemporaneous covariates, and those of \cite{caetano2022difference} condition on the entire covariate history (i.e. including covariates measured later than the outcome). Importantly, we require no assumption that covariates are unaffected by treatment; thus, our methods can account for treatment-covariate feedback. Finally, positivity requires that there be positive probability of receiving the proposed treatment pattern $a_t^*$, for individuals with treatment history $\overline{a}_{t-1}^*$ and at all levels of the covariate/outcome history needed for parallel trends. 
\par Following \cite{renson2022identifying}, under Assumptions \ref{asn:noint}-\ref{asn:pt}, we have:
\begin{align*}
    \mu_t &= \E[Y_0(\overline a^*)] + \sum_{k=1}^t \E[Y_k(\overline a^*)-Y_{k-1}(\overline a^*)]\\
    &= \E[Y_0(\overline a^*)] +  \sum_{k=1}^t \E(\E[Y_k(\overline a^*)-Y_{k-1}(\overline a^*)|\overline W_1, A_1])\\
     &= \E[Y_0(\overline a^*)] +  \sum_{k=1}^t \E(\E[Y_k(\overline a^*)-Y_{k-1}(\overline a^*)|\overline W_1, A_1=a_1^*])\\
     &\vdots \\
      &= \E[Y_0(\overline a^*)] +  \sum_{k=1}^t \E(\E[\cdots \E\{Y_k(\overline a^*)-Y_{k-1}(\overline a^*)|\overline W_k,\overline A_k\}\cdots|\overline W_1, A_1=a_1^*])\\
      &= \E[Y_0(\overline a^*)] +  \sum_{k=1}^t \E(\E[\cdots \E\{Y_k(\overline a^*)-Y_{k-1}(\overline a^*)|\overline W_k,\overline A_k=\overline a_k^*\}\cdots|\overline W_1, A_1=a_1^*])\\
      &= \E[Y_0(a_0^*)] +  \sum_{k=1}^t \E(\E[\cdots \E\{Y_k(\overline a^*_k)-Y_{k-1}(\overline a^*_k)|\overline W_k,\overline A_k=\overline a_k^*\}\cdots|\overline W_1, A_1=a_1^*])\\
      &= \E[Y_0] +  \sum_{k=1}^t \E(\E[\cdots \E\{Y_k-Y_{k-1}|\overline W_k,\overline A_k=\overline a_k^*\}\cdots|\overline W_1, A_1=a_1^*])\\
\end{align*}
The first equality of the above proof holds by telescoping, the second and fourth by law of iterated expectation, the third and fifth by parallel trends, the sixth by no anticipation, and the seventh by no interference and treatment version irrelevance because $Y_k(\overline a) - Y_{k-1}(\overline a)=Y_k-Y_{k-1}$ in the event $\overline A_k =\overline a_k$. Note that in the last line above, covariates $\overline W_k$ and treatments $\overline A_k$ may be measured later than outcomes $Y_{k-1}$; thus, in estimators we consider below, outcomes may be regressed on future variables, which may appear counter-intuitive. This concern may be addressed by noting that differences $Y_k-Y_{k-1}$ are only determined in the future of $\overline W_k$ and $\overline A_k$.
\par In order to simplify notation in our definitions of estimators, we additionally define:
\begin{align*}
     \E[Y_0] &+  \sum_{k=1}^t \E(\E[\cdots \E\{Y_k-Y_{k-1}|\overline W_k,\overline A_k=\overline a_k^*\}\cdots|\overline W_1, A_1=a_1^*])\\
    &= \E[Y_0] + \sum_{k=1}^t (\phi_{k,k} - \phi_{k-1, k} )\\
    &= \psi_t
\end{align*}
with $\psi_t=\E[Y_0] + \sum_{k=1}^t (\phi_{k,k} - \phi_{k-1, k} )$ and $\phi_{j,k} = \E\{ \E[ \cdots \E(Y_j\mid \overline W_k,\overline A_k=\overline a_k^*)\cdots|\overline W_1, A_1=a_1^*]\}$. Note that $\phi_{j,k}$ is the time-varying g-formula \citep{robins1986new}, which identifies causal effects of the type $\mu_t$ under sequential exchangeability. Here, $\phi_{j,k}$ has no causal interpretation in itself, but instead forms the building blocks of the parameter $\psi_t$ (which equals $\mu_t$ under our assumptions). However, established methods to estimate $\phi_{j,k}$ based on its efficient influence function form the basis of our proposed estimators.
\subsection{Estimators}

For the rest of this paper we assume that Assumptions \ref{asn:noint}-\ref{asn:pos} hold, so that the intervention-specific mean is identified as $\mu_t=\psi_t$, and focus attention on estimating the quantity $\psi_t.$ Before defining the efficient influence function and resulting estimator, we will introduce some additional notation. Let $g_m(\overline W_m)=\prod_{k=1}^m f(A_k=a_k^* |\overline W_k, \overline A_{k-1}=\overline a_{k-1}^*)$ denote the cumulative propensity score up to time $m$. Define $Q^{j, k, k+1} \equiv Y_j$, and let $Q^{j,k,m}$ be iteratively defined as
\[Q^{j,k,m}(\overline{W}_m) = E[Q^{j, k, m+1}(\overline{W}_{m+1})|\overline{A}_m = \overline{a}_m^*, \overline{W}_m].\]
It can be seen that $\phi_{j,k} = E[Q^{j,k,1}(\overline{W}_1)].$ In what follows, we use the notation $P_n X$ to denote the empirical average of $X$, $n^{-1}\sum_{i=1}^n X_i$. Whenever there is no ambiguity, we drop the argument $\overline{W}_m$ from $g_m(\overline W_m)$ and $Q^{j,k,m}(\overline{W}_m)$ for conciseness. For example, we let $\widehat{Q}_{i}^{j,k,m}$ denote an estimate of $Q^{j,k,m}(\overline{W}_m)$ for observation $i$. For a function $f$ of $\overline{W}_m$, we let $||f||^2 = \int [f(\overline{w}_m)]^2dP(\overline{w}_m)$.

The following theorem defines the efficient influence function (EIF) for the parameter $\psi_t$, upon which multiply robust and efficient data-adaptive estimators will be based. 
\begin{theorem}\label{thm:if}
The Efficient Influence Function for $\psi_t$ is given by
\begin{align*}
    IF(\psi_t)&= IF(\E[Y_0]) + \sum_{k=1}^t \big \{ IF(\phi_{k,k}) - IF(\phi_{k-1,k})\big \}
\end{align*}
where $IF(\E[Y_0]) = Y_0 - \E[Y_0]$ and $IF(\phi_{j,k}) = \sum_{m=1}^k  \frac{I(\overline{A}_{m} = \overline{a}_{m}^*)}{g_m( \overline W_m)}(Q^{j,k,m+1} - Q^{j,k,m}) + Q^{j,k,1} - \phi_{j,k}.$
\end{theorem}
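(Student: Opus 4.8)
The plan is to exploit the affine structure of $\psi_t$ together with the fact that the statistical model is nonparametric. Because $\psi_t = \E[Y_0] + \sum_{k=1}^t(\phi_{k,k} - \phi_{k-1,k})$ is a fixed linear combination of the functionals $\E[Y_0]$ and $\{\phi_{j,k}\}$, and because the influence-function operator is linear on pathwise-differentiable parameters, the claimed decomposition $IF(\psi_t) = IF(\E[Y_0]) + \sum_{k=1}^t\{IF(\phi_{k,k}) - IF(\phi_{k-1,k})\}$ follows immediately once each summand is shown to be the EIF of its corresponding block. I would state this reduction first. The term $IF(\E[Y_0]) = Y_0 - \E[Y_0]$ is the textbook EIF of a marginal mean and needs no further argument, so it remains only to establish that the displayed expression for $IF(\phi_{j,k})$ is the EIF of the g-formula functional $\phi_{j,k}$. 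Since the model places no restrictions on $P$, the tangent space is all of $L^2_0(P)$; consequently any mean-zero, finite-variance gradient is automatically the unique canonical gradient, so it suffices to exhibit one valid gradient, after which efficiency and uniqueness are free.

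The core of the argument is the pathwise derivative of the map $P \mapsto \phi_{j,k}(P)$. I would recognize $\phi_{j,k}$ as the longitudinal g-computation functional in its sequential-regression (iterated conditional expectation) form, with outcome $Y_j$ carried back through $k$ regression steps $Q^{j,k,k} \to Q^{j,k,k-1} \to \cdots \to Q^{j,k,1}$. Introducing a regular one-dimensional submodel $\{P_\epsilon\}$ with $P_0 = P$ and score $S$, the goal is to show $\frac{d}{d\epsilon}\phi_{j,k}(P_\epsilon)\big|_{\epsilon=0} = \E[IF(\phi_{j,k})\,S]$, which identifies $IF(\phi_{j,k})$ as the gradient. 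I would carry this out by induction on the recursion depth $k$. In the base case $k=1$, the functional $\phi_{j,1} = \E[\E(Y_j \mid A_1 = a_1^*, \overline{W}_1)]$ is the standard g-formula whose EIF $\frac{I(A_1 = a_1^*)}{g_1}(Y_j - Q^{j,1,1}) + Q^{j,1,1} - \phi_{j,1}$ is classical.

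For the inductive step I would differentiate the outermost expectation and the outermost regression, splitting the derivative into (i) a piece from perturbing the marginal law of $\overline{W}_1$, which pairs with $S$ through $Q^{j,k,1} - \phi_{j,k}$, and (ii) a piece from perturbing the first-step conditional law, which — after re-expressing the conditional-on-$A_1 = a_1^*$ expectation as an inverse-probability-weighted unconditional expectation — contributes the $m=1$ term $\frac{I(A_1 = a_1^*)}{g_1}(Q^{j,k,2} - Q^{j,k,1})$ together with the derivative of the inner $(k-1)$-step functional built from $Q^{j,k,2}$, to which the inductive hypothesis applies and yields $\sum_{m=2}^k \frac{I(\overline{A}_m = \overline{a}_m^*)}{g_m}(Q^{j,k,m+1} - Q^{j,k,m})$. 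Summing reproduces the claimed expression.

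The main obstacle is the bookkeeping in this inductive step: correctly tracking how a perturbation of each conditional density propagates through the nested expectations, re-deriving the cumulative weights $g_m = \prod_{l=1}^m f(A_l = a_l^* \mid \overline{W}_l, \overline{A}_{l-1} = \overline{a}_{l-1}^*)$ from the iterated change of measure (so that the single $\pi$-factor picked up at step 1 composes with the inner weights), and verifying that the cross terms telescope so that only the residual products $Q^{j,k,m+1} - Q^{j,k,m}$ survive. A second, setting-specific subtlety is the time ordering flagged after the identification display: since $\overline{W}_k$ and $\overline{A}_k$ may contain variables realized after $Y_{k-1}$, I would take care that each conditional expectation and each weight $g_m$ is defined against the correct conditioning set, and that the likelihood factorization used to define the scores respects this ordering, so that the derivative identity and the mean-zero property of each weighted residual term remain valid.
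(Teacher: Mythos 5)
Your proposal is correct and follows essentially the same route as the paper's proof: reduce to $IF(\phi_{j,k})$ by linearity, then compute the pathwise derivative of the iterated-conditional-expectation (sequential-regression) representation of the g-formula recursively, with each regression step contributing an inverse-probability-weighted residual $\frac{I(\overline{A}_m=\overline{a}_m^*)}{g_m}(Q^{j,k,m+1}-Q^{j,k,m})$ and the marginal perturbation contributing $Q^{j,k,1}-\phi_{j,k}$; the paper phrases this as a forward iteration of an integral identity over $m$ terminating at $IF(Y_j)=0$, which is the same telescoping you obtain by induction on $k$. Your explicit observation that the nonparametric tangent space equals $L^2_0(P)$, so that any gradient is automatically the canonical (efficient) one, is a point the paper leaves implicit and is worth stating.
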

Given the EIF a ``one-step'' estimator for $\psi_t$ is given by
\begin{equation}\label{eqn:full-estimator}
    \widehat{\psi}_t = P_n \Bigg\{ Y_0 + \sum_{k=1}^t \bigg(\widetilde{\phi}_{k,k}(\widehat{\Theta}) - \widetilde{\phi}_{k-1,k}(\widehat{\Theta})\bigg)\Bigg\}
\end{equation}
where:
\begin{equation*}
    \widetilde{\phi}_{jki}(\widehat{\Theta}) =  \sum_{m=1}^k \frac{I(\overline{A}_{mi} = \overline{a}^*_m)}{\widehat{g}_{mi}}(\widehat{Q}_{i}^{j,k,m+1} - \widehat{Q}_{i}^{j,k,m}) + \widehat{Q}_{i}^{j,k,1}
\end{equation*}
where $\widehat{Q}_i^{j,k,m}$ and $\widehat g_{mi}=\widehat g_m(\overline W_{mi})$ denote estimators of the referenced quantities, and $\widehat{\Theta}=\{ \widehat{Q}_i^{j,k,m}, \widehat g_m : m=1,...,k; k=1,...,t\}$ denotes the full collection of nuisance function estimators. In the next section we will describe asymptotic behavior of this estimator, including multiple robustness, efficiency, and convergence rates under difference choices of nuisance function estimators.

\subsection{Asymptotic Results}

\par This section describes the asymptotic distribution of the one-step estimator $\widehat \psi_t$ of the intervention-specific mean. This discussion includes (1) inferential techniques based on the asymptotic normality of the estimator, (2) conditions under which the estimator is unbiased, and (3) description of a sample splitting procedure that allows for efficient estimation under flexible data-adaptive estimators of the nuisance functions.
\begin{theorem}\label{thm:asymp}
Assume the following conditions:
\begin{itemize}
    \item[(C1)] The estimated nuisance functions, $\hat\Theta$, are in a Donsker class, and are such that $||IF[\phi_{j,k}(\hat\Theta)] - IF[\phi_{j,k}]||\xrightarrow{p} 0$, and 
    \item[(C2)] For some constant $\epsilon > 0$, $P[\epsilon < \widehat{g}_{k}(\overline W_m)] = P[\epsilon < g_{k}(\overline W_m)]=1$, for all $k$ and $\overline{a}^*$,
\end{itemize}
where $\xrightarrow{p}$ denotes convergence in probability. Let 
\[R_{2,n}=\sum_{m = 1}^t \norm{g_m - \widehat{g}_m}\sum_{k = m}^t \bigg( \norm{Q^{k,k,m} - \widehat{Q}^{k,k,m}}  + \norm{Q^{k-1,k,m} - \widehat{Q}^{k-1,k,m}}\bigg)\]
Under (C1) and (C2), we have:
\[\widehat\psi_t - \psi_t =(P_n - P)IF(\psi_t) + o_p(n^{-1/2})+O_p(R_{2,n}) \]
\end{theorem}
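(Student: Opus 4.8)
The plan is to establish a standard first-order (von Mises) expansion of the one-step estimator around the truth, isolating a linear term, an empirical-process term, and a second-order remainder, and then to control the latter two using (C1) and (C2) respectively. Write $\chi(\Theta) = Y_0 + \sum_{k=1}^t \{D_{k,k}(\Theta) - D_{k-1,k}(\Theta)\}$ for the uncentered influence function, where $D_{j,k}(\Theta) = \sum_{m=1}^k \frac{I(\overline A_m = \overline a_m^*)}{g_m}(Q^{j,k,m+1} - Q^{j,k,m}) + Q^{j,k,1}$, so that by inspection of \eqref{eqn:full-estimator} we have $\widehat\psi_t = P_n \chi(\widehat\Theta)$ and $IF(\psi_t) = \chi(\Theta) - \psi_t$. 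A preliminary step is to verify the mean-zero property $P D_{j,k}(\Theta) = \phi_{j,k}$, which follows by iterating expectations and repeatedly using the defining recursion $Q^{j,k,m}(\overline W_m) = \E[Q^{j,k,m+1}\mid \overline A_m = \overline a_m^*, \overline W_m]$; this confirms $P\chi(\Theta) = \psi_t$. I would then decompose
\[\widehat\psi_t - \psi_t = (P_n - P)\chi(\Theta) + (P_n - P)\{\chi(\widehat\Theta) - \chi(\Theta)\} + \{P\chi(\widehat\Theta) - \psi_t\},\]
and note that, since $(P_n - P)$ annihilates constants, the first term equals $(P_n - P)IF(\psi_t)$.

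For the empirical-process term $T_1 = (P_n - P)\{\chi(\widehat\Theta) - \chi(\Theta)\}$, I would observe that it is a finite linear combination of terms of the form $(P_n - P)\{IF[\phi_{j,k}(\widehat\Theta)] - IF[\phi_{j,k}]\}$, the constant offsets again dropping out. Condition (C1) places the estimated influence-function components in a Donsker class and guarantees $\norm{IF[\phi_{j,k}(\widehat\Theta)] - IF[\phi_{j,k}]} \xrightarrow{p} 0$; a standard empirical-process equicontinuity argument for Donsker classes then gives that each such term is $o_p(n^{-1/2})$, and since there are finitely many of them, $T_1 = o_p(n^{-1/2})$.

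The crux is the second-order remainder $R_2 = P\chi(\widehat\Theta) - \psi_t = \sum_{k=1}^t \{(PD_{k,k}(\widehat\Theta) - \phi_{k,k}) - (PD_{k-1,k}(\widehat\Theta) - \phi_{k-1,k})\}$, where the $Y_0$ contribution vanishes since $PY_0 = \E[Y_0]$. For each pair $(j,k)$ I would compute $PD_{j,k}(\widehat\Theta) - \phi_{j,k}$ by applying iterated expectations and peeling off the nested conditional expectations one time point at a time, using the true recursion for $Q^{j,k,m}$ together with the exact definition of the estimated cumulative weights $\widehat g_m$. The inverse-probability weights telescope against the sequential regressions so that all ``pure'' terms cancel, leaving a mixed-bias expression: a sum over $m = 1,\ldots,k$ of integrals of products of the propensity error $(g_m - \widehat g_m)$ and the outcome-regression error $(Q^{j,k,m} - \widehat Q^{j,k,m})$, multiplied by bounded factors. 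Applying Cauchy–Schwarz together with (C2), which keeps $\widehat g_m$ and $g_m$ bounded away from zero and hence all weights and ratios bounded, bounds each such term by a constant times $\norm{g_m - \widehat g_m}\,\norm{Q^{j,k,m} - \widehat Q^{j,k,m}}$. Reorganizing the double sum via $\sum_{k=1}^t \sum_{m=1}^k = \sum_{m=1}^t \sum_{k=m}^t$ and collecting the $j=k$ and $j=k-1$ contributions reproduces exactly the expression $R_{2,n}$, so that $R_2 = O_p(R_{2,n})$. Combining the three pieces yields the claimed expansion.

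I expect the main obstacle to be the remainder computation in the last paragraph: carrying the nested conditional-expectation and telescoping algebra correctly through the cumulative weights and the recursively defined $Q^{j,k,m}$, and verifying that only the mixed propensity-times-outcome cross terms survive while all single-error terms cancel, is delicate bookkeeping. This cancellation is precisely what underlies the multiple robustness of the estimator, so it is the step that must be carried out with the most care.
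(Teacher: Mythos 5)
Your proposal is correct and follows essentially the same route as the paper: the same decomposition into a linear term, an empirical-process term controlled via the Donsker condition (C1), and a mixed-bias second-order remainder bounded by Cauchy--Schwarz and the positivity condition (C2), then summed over $(j,k)$ and reindexed to obtain $R_{2,n}$. The only difference is that the paper outsources the step you flag as the crux---the telescoping computation showing the remainder $P\chi(\widehat\Theta)-\psi_t$ reduces to sums of products of propensity and outcome-regression errors with bounded coefficients---to Lemma 1 of \citet{luedtke2017sequential}, whereas you propose to derive it directly.
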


\begin{proof}
    By Lemma 1 of \cite{luedtke2017sequential} we have 
    \[\widetilde{\phi}_{jk} - \phi_{jk} = (P_n - P)IF(\phi_{jk}) + (P_n-P)(\widehat{IF}(\phi_{jk}) - IF(\phi_{jk})) + Rem_{jk},\]
    where $\widehat{IF}(\phi_{jk})$ denotes the IF of Theorem~\ref{thm:if} with $\Theta$ replaced by $\widehat\Theta$, and where 
    \[Rem_{jk} = \sum_{m=1}^k \int c_m(\overline w_m) \{g_m(\overline w_m) - \widehat{g}_m(\overline w_m)\}\{Q^{j,k,m}(\overline w_m) - \widehat{Q}^{j,k,m}(\overline w_m)\}dP(\overline w_m),\]
    where $c_m(\overline W_m)$ is a constant given in the original Lemma, which is bounded in probability according to C2. Under C1, Lemma 19.24 of \cite{van2000asymptotic} shows that the term $(P_n-P)(\widehat{IF}(\phi_{jk}) - IF(\phi_{jk}))$ is $o_P(n^{-1/2})$. Likewise, using the Cauchy-Schwarz inequality we have
    \[|Rem_{jk}| \leq \norm{g_m - \widehat{g}_m}\norm{Q^{j,k,m} - \widehat{Q}^{j,k,m}}.\]
    Applying the above to 
   \[\psi_t=\E[Y_0] + \sum_{k=1}^t (\phi_{k,k} - \phi_{k-1, k} )
     ;\quad \widetilde\psi_t=P_n Y_0 + \sum_{k=1}^t (\widetilde\phi_{k,k} - \widetilde\phi_{k-1, k} )
     \]
     together with Theorem \ref{thm:if} yields the result. 
\end{proof}
Theorem \ref{thm:asymp} describes the asymptotic behavior of our proposed estimator. The first error component illustrates the asymptotic bias of our estimator, as well as the importance of using doubly robust estimating equations such as the EIF when using machine learning. First, note that $R_{2,n}$ is defined essentially by sums of product of errors in the estimation of the nuisance parameters $g_m$ and $Q^{k,k,m}$ as well as $Q^{k-1,k,m}$. Machine learning regression algorithms are specifically designed to make each of these errors small. Therefore, the product of each pair of errors is expected to be smaller than each of its components. In comparison, if these nuisance regressions are estimated with parametric models, large regression errors from model misspecificaiton are more likely. Therefore, one of these errors could be large, which would make the product larger than each of its components \citep{robins2007comment}, leading to large bias. 

Estimators constructed using the efficient influence function as above are often called doubly-robust because they yield consistent estimators even in the case of misspecification of one of the models under consideration. According to the above theorem, the estimator $\widehat\psi_t$ is consistent if $R_{2.n}=o_p(1)$. This holds if, for each time point $m\leq t$, we have either $g_m$ is consistently estimated, or the sequence $S_m=\{(Q^{k,k,m},Q^{k-1,k,m}):m\leq k\leq t\}$ is consistenly estimated, where we note that consistent estimation of $S_m$ requires consistent estimation of $S_{m+1}$. Assume that we start with a consistent outcome regression estimator at time point $t$, but then at time $m\leq t$ we have an inconsistent outcome regression. Then we would require that the propensity score is estimated consistently at all times prior to and including $m$, since in this case $S_j:j\leq m$ cannot be consistently estimated anymore. Since we have $t+1$ possible values for $m$, this means that we have $t+1$ chances of consistency. Importantly, this also implies that using an adaptation of the sequential doubly robust estimators of \cite{luedtke2017sequential} and \cite{rotnitzky2017multiply} would not yield estimators that are more robust than a simple one-step estimator.

The result of the theorem also allows us to study the weak convergence of the estimator. In particular, if $R_{2,n}=o_p(n^{-1/2})$, then it will be the case that 
\[
\sqrt{n}(\widehat\psi_t - \psi_t ) \xrightarrow{d} N[0, \norm{ IF(\psi_t) }^2]
\]
where $\xrightarrow{d}$ denotes convergence in distribution. In this case, an estimator of the asymptotic variance of $\widehat \psi_t$ is given by:
\begin{equation*}
    \widehat{V}(\widehat\psi_t) = \frac{1}{n}\sum_{i=1}^n \left\{ Y_{0i} + \sum_{k=1}^t \bigg(\widetilde{\phi}_{k,k,i}(\widehat{\Theta}) - \widetilde{\phi}_{k-1,k,i}(\widehat{\Theta})\bigg)   - \widehat \psi_t \right\}^2.
\end{equation*}
A sufficient condition for the second order term of the theorem to be $o_p(n^{-1/2})$ is that all nuisance parameters are estimated consistently at rates $n^{-1/4}$. These rates are achievable by some machine learning algorithms, under assumptions.  

At this point it is important to clarify a semantic note on the use of the term ``double robustness''. In this paper, as in most manuscripts that use machine learning for estimation of causal effects, we use the term double robustness to refer to the property that the estimator is consistent under inconsistent estimation of some of the nuisance estimators, as stated above. Some of the literature on causal inference, especially the literature focused on parametric models, uses the term double robustness to refer to the property that the estimators are $\sqrt{n}$-consistent and asymptotically normal under misspecification of some of the nuisance models. While our proposed estimator would also enjoy asymptotic normality under misspecification of some models if $\hat\Theta$ is estimated within pre-specified parametric models, in this paper we are focused on addressing model misspecification bias and therefore focus on results for machine learning nuisance estimation, using the term double-robustness to refer to consistency of the estimator under model misspecification.

\subsection{Sample-Split Estimator}
\par The asymptotic results for the proposed estimator relies on the condition that the estimated nuisance functions, $\hat\Theta$, are within the Donsker class. This condition controls how close functions of the estimated nuisance functions can get to their limiting versions (e.g. the difference between $(P_n - P)\tilde{\phi}_{jk}(\widehat{\Theta})$ and $(P_n - P)\tilde{\phi}_{jk}(\Theta)$) \citep{kennedy2016semiparametric}. This condition can be restrictive if estimators with unbounded variation norm are used, as these are generally not Donsker \citep[e.g.,][]{diaz2019statistical}. This condition can be removed by implementing sample-split versions of the estimators. Partition the index set $\{1,...,n\}$ into $M$ mutually exclusive groups $V_1,...,V_M$ such that $\bigcup_{m=1}^M V_m =\{1,...,n\}$ and $V_m \bigcap V_{m^{'}} = \emptyset$ for $m \ne m^{'}$. For each validation set, $m$, let $\mathcal{T}_m = \{1,...,n \}\setminus\mathcal{V}_m$ denote the corresponding training set. Let $m(i)$ denote the index of the validation set containing unit $i$ (e.g. $m(i) = s$ if $i \in \mathcal{V}_s$). Finally, let $\widehat{\Theta}_m = (\widehat{Q}_m, \widehat{g}_m)$ denote the estimated nuisance functions obtained using training data $\mathcal{T}_m$.
\par The sample-split estimator of the desired causal estimand is obtained by replacing the full-sample estimated nuisance functions in Equation \ref{eqn:full-estimator} with their training-sample estimated counterparts for each unit.

\begin{equation*}
    \widehat{\psi}_{t,ss} = P_n \Bigg\{ Y_{0i} + \sum_{k=1}^t \bigg(\widetilde{\phi}_{k,k}(\widehat{\Theta}_{m(i)}) - \widetilde{\phi}_{k-1,k}(\widehat{\Theta}_{m(i)})\bigg)\Bigg\}
\end{equation*}
Note that within each validation set, $V_m$, the estimated nuisance functions $\widehat{\Theta}$ are fixed. This fact allows us to apply the empirical process properties that our asymptotic results are based upon without invoking Donsker class conditions.
\par Because estimated treatment effects depend on the random partitioning of the data, we may repeat the procedure over different partitions and combine estimates of the treatment effect and variance. Let $\widehat{\psi}_{t,ss,k}$ for $k=1,...,K$ and $ \widehat{V}(\widehat{\psi}_{t,ss,k})$ denote the estimated treatment effect and variance for the $k$-th repetition. The combined treatment effect can be estimated as the median estimate across the $K$ partitions of the data, $\widetilde{\psi}_{t,ss} = \text{median}_k(\widehat{\psi}_{t,ss,k})$ \citep{zivich2021machine}. As opposed to the mean, the median is more robust to outliers across the partitioned estimates \citep{chernozhukov2018double}. The estimated variance of this estimator is given by $\widehat{V}(\widetilde{\psi}_{t,ss}) = \text{median}_k(\widehat{V}(\widehat{\psi}_{t,ss,k}) + (\widehat{\psi}_{t,ss,k} - \widetilde{\psi}_{t,ss})^2)$. This estimator incorporates the uncertainty arising from the random partitioning of the data.

\section{Simulation study}\label{sec:sim}
We generated 500 datasets each at sample sizes $n=1,000$, $5,000$, and $20,000$ according to the following structural model for $t=0,1,2$:
\begin{align*}
    U &\sim N(0,1)\\
    W_{1t} &\sim N(\gamma_{01t} + \gamma_{11t}W_{1,t-1} + \gamma_{21t}A_{t-1}, 0.1)\\
    W_{2t} &\sim N(\gamma_{02t} + \gamma_{12t}W_{1t} + \gamma_{22t}W_{2, t-1} + \gamma_{32t}A_{t-1}, 0.1)\\
    W_{3t} &\sim N(\gamma_{03t} + \gamma_{13t}W_{1t} + \gamma_{23t}W_{2t} + 
    \gamma_{33t}W_{3, t-1}+\gamma_{43t}A_{t-1}, 0.1)\\
    A_t &\sim \Ber\{ (1-A_{t-1})\logit(\alpha_{0t} +\alpha_{1t}W_{1t} +\alpha_{2t}\cos W_{2t} +\alpha_{3t}W_{3t}^2 +\alpha_{4t}U) \} \\
    Y_t &\sim N(\beta_{0t} + \beta_{1t}\sin W_{1t}   + \beta_{2t}W_{2t}W_{3t}   + \beta_{3t}A_t + U, 0.1)
\end{align*}
All true parameter values were generated once from a standard normal distribution. To see that parallel trends holds in this simulation, note that if $A_t$ is set to zero for all $t$, differences $Y_t-Y_{t-1}$ will not depend on $U$ because $U$ enters the model for $Y_t$ additively and does not enter the model for $W_t$. In each simulated dataset, we estimated the parameter $\mu_t=\E[Y_t(\overline A = \overline 0)]$, the intervention-specific mean under the policy to set everyone to be untreated. Nuissance functions were estimated using parametric models with treatment and outcome correctly specified (true), and misspecifying the treatment (gfal), outcome (qfal), or both (bfal); all without sample splitting. For each set of parametric nuissance functions, the estimator $\widehat \psi_t$ and variance estimator $\widehat V(\widehat \psi_t)$ were calculated. Finally, nuissance functions were estimated using Super Learner with a library including the mean, lasso, multivariate adaptive regression splines, and kernal support vector machines, using sample splitting with $M=10$ random partitions. The estimator $\widehat \psi_{t,ss}$ and variance estimator $\widehat V(\widehat \psi_{t,ss})$ were then calculated.
\par Results from the simulation study are shown in Table \ref{tbl:sim1} and Figure \ref{fig:sim1}. At all sample sizes considered (ranging from $n=1,000$ to $50,000$) and whenever either set of models is correctly specified, estimators appear to be converging at root-n rate to a normal distribution centered at zero and with variance close to the EIF-based variance estimate. This is true regardless of whether data-adaptive or parametric nuisance estimators are used, suggesting there is little loss of efficiency in using data-adaptive estimators. Though the consistency of the proposed variance estimator depends theoretically on all models being correctly specified, in this particular example the results do not appear very sensitive to this requirement. That is, the mean of estimated variances appears very close to the empirical variance of estimates over repeated simulations, even when only either the propensity score or outcome models are correct.

\begin{table}[b]
     \centering
 \begin{tabular}{r|r|r|r|r|r|r|r}
\hline
$n_{obs}$ & method & bias $(\widehat\psi_1)^2$ & $\widehat V_{sim}(\widehat\psi_1)$ & $\widehat V_{eif}(\widehat\psi_1)$ & bias $(\widehat\psi_2)^2$ & $\widehat V_{sim}(\widehat\psi_2)$ & $\widehat V_{eif}(\widehat\psi_2)$\\
\hline
1000 & bfal & 0.1269 & 0.1952 & 0.2024 & 0.2119 & 0.3131 & 0.3790\\
\hline
 & gfal & $<$0.0001 & 0.1889 & 0.1882 & 0.0001 & 0.3122 & 0.3229\\
\hline
 & qfal & $<$0.0001 & 0.1915 & 0.2138 & $<$0.0001 & 0.3233 & 0.4417\\
\hline
 & super & 0.0490 & 0.2224 & 0.1670 & 0.0029 & 0.3424 & 0.2847\\
\hline
 & true & $<$0.0001 & 0.1911 & 0.1918 & $<$0.0001 & 0.3181 & 0.3333\\
\hline
5000 & bfal & 0.1280 & 0.0349 & 0.0404 & 0.2219 & 0.0676 & 0.0759\\
\hline
 & gfal & $<$0.0001 & 0.0349 & 0.0376 & 0.0004 & 0.0629 & 0.0648\\
\hline
 & qfal & $<$0.0001 & 0.0350 & 0.0427 & 0.0002 & 0.0682 & 0.0881\\
\hline
 & super & 0.0017 & 0.0353 & 0.0347 & 0.0003 & 0.0681 & 0.0601\\
\hline
 & true & $<$0.0001 & 0.0351 & 0.0383 & 0.0004 & 0.0631 & 0.0669\\
\hline
20000 & bfal & 0.1328 & 0.0094 & 0.0101 & 0.2184 & 0.0189 & 0.0190\\
\hline
 & gfal & $<$0.0001 & 0.0094 & 0.0094 & 0.0002 & 0.0185 & 0.0162\\
\hline
 & qfal & $<$0.0001 & 0.0095 & 0.0107 & 0.0001 & 0.0190 & 0.0220\\
\hline
 & super & 0.0002 & 0.0085 & 0.0093 & $<$0.0001 & 0.0154 & 0.0162\\
\hline
 & true & $<$0.0001 & 0.0096 & 0.0096 & 0.0002 & 0.0187 & 0.0167\\
\hline
50000 & bfal & 0.1310 & 0.0046 & 0.0040 & 0.2158 & 0.0061 & 0.0076\\
\hline
 & gfal & $<$0.0001 & 0.0047 & 0.0038 & 0.0001 & 0.0059 & 0.0065\\
\hline
 & qfal & $<$0.0001 & 0.0046 & 0.0043 & $<$0.0001 & 0.0060 & 0.0088\\
\hline
 & super & 0.0001 & 0.0042 & 0.0038 & 0.0002 & 0.0060 & 0.0066\\
\hline
 & true & $<$0.0001 & 0.0046 & 0.0038 & 0.0001 & 0.0058 & 0.0067\\
\hline
\end{tabular}
    \caption{Results of simulation study, with $n_{sims}=500$, and $M=2$ (bias and variance numbers multiplied by 100). $\widehat V_{sim}$ is the empirical variance across simulation runs, and $\widehat V_{eif}$ is the mean of influence curve-based variance estimates. }
    \label{tbl:sim1}
\end{table}

\begin{figure}[b]
    \centering
    \includegraphics[scale=.8]{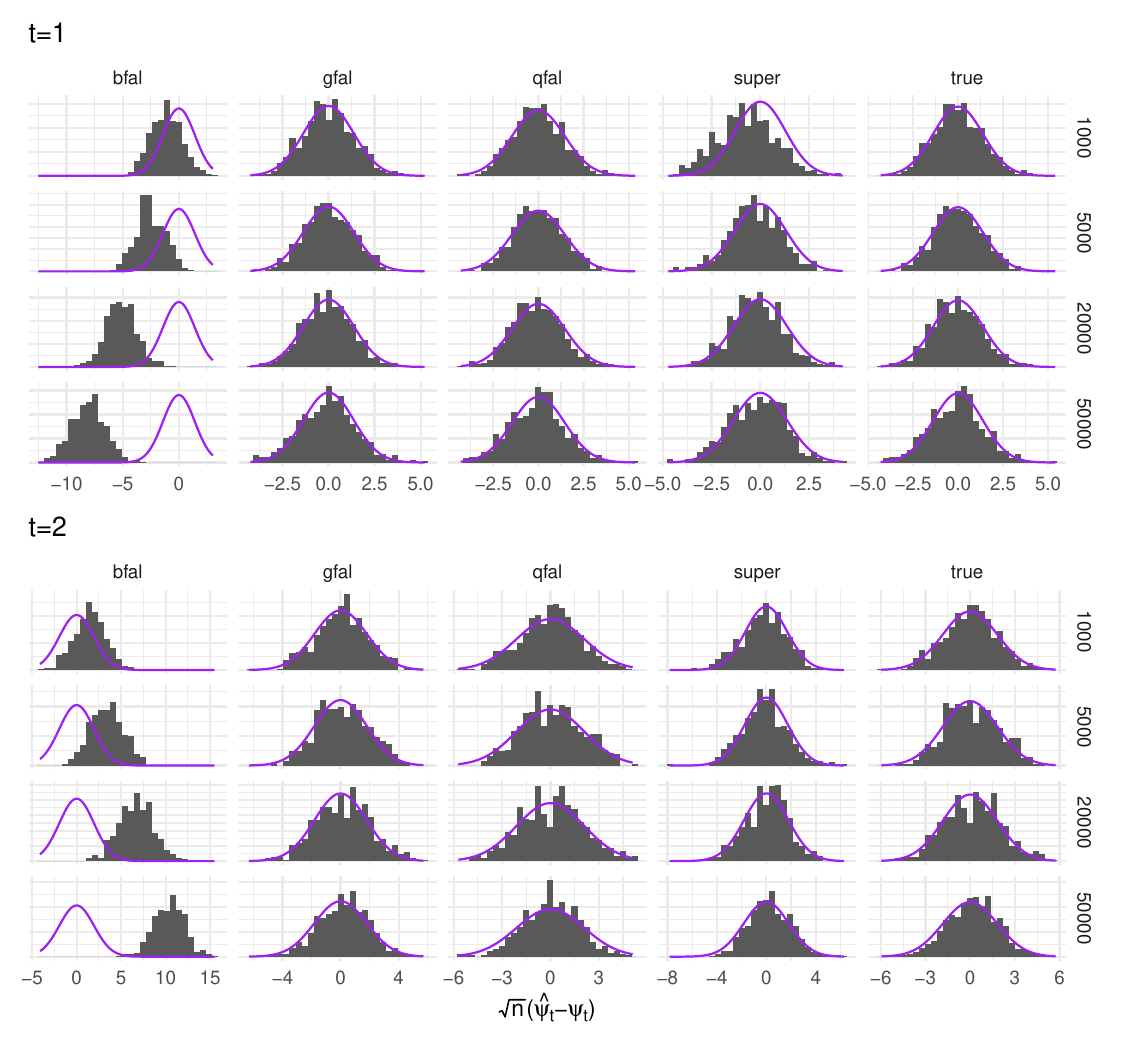}
    \caption{Simulation results. Histograms are error estimates scaled by $\sqrt{n}$. Curves are normal density functions with mean 0 and variance equal to the mean of estimated efficient influence function-based variances across simulation runs.}
    \label{fig:sim1}
\end{figure}

\section{Minimum wages and self-rated health}

In this section, we apply our proposed approach to estimate the effect of recent minimum wage increases on self-reported health using data from the Behavioral Risk Factor Surveillence Survey (BRFSS). BRFSS is a cross-sectional telephone survey focused conducted among more than 400,000 adults annually \citep{silva2014behavioral}. BRFSS data are supplemented using annual state-level minimum wage data and control variables related to state economic conditions from the U.S. Department of Labor, Bureau of Labor Statistics, and Bureau of Economic Analysis \citep{us2023changes}. These data were originally compiled by \citet{sigaud2022effects} for use in a similar difference-in-difference based analysis. 

Our target estimand is the expected self-reported health for the US population, had the US federal government counterfactually implemented a minimum wage on par with that of New York State (NYS). Specifically, consider a Likert scale measure of self-reported health ranging from 1-5 corresponding to to self-rated ``excellent", ``very good", ``good", ``fair", and ``poor" general health. Let $Y_{st}$ denote the proportion of residents in a given state, $s$, who report having ``excellent" or ``very good" health during year $t$. Let $A_{st}$ be an indicator of whether state $s$ has minimum wage greater than or equal to that of NYS in year $t$. Then the target estimand is the intervention-specific mean $\mu_t = E[Y_{st}(\bar{a}^*_t = \bar{1})]$: the expected proportion of individuals who would report excellent or very good health if the US had a federal minimum wage law consistent with that of NYS. If a given state has implemented a minimum wage that is greater than that of NYS, then this state is consistent the hypothetical federal minimum wage and is not intervened upon. Importantly, because the exposure for each state in a given year explicitly depends upon the minimum wage in NYS, data can not be considered as independent and identically distributed, which is an issue because the proposed inferential techniques rely on this assumption. Instead, we draw all conclusions conditional on knowledge of NYS' minimum wage trajectory over the period of study. Conditional on this information, data may be considered IID and inference can proceed normally.  

Our analysis is similar in scope to that of \citet{sigaud2022effects}. In their analysis, the authors estimate the following two-way fixed effects model:
\begin{equation*}
 E[Y_{istm}] = \beta_0 + \beta_1 MW_{s,t-1} + \beta_2 Z_{itm} + \beta_3 X_{st} + State_s + Year_t + Month_m + \phi_{st} + \epsilon_{istm} 
\end{equation*}
Where $MW_{st}$ denotes the minimum wage in state $s$ during year $t-1$, $Z_{itm}$ denotes individual-level control variables, and $X_{st}$ denotes state-level controls.  Note that our approach differs from \citet{sigaud2022effects} in three important ways: (1) the estimand of interest, (2) the identification and modeling assumptions underlying our approach, and (3) the framing of the minimum wage exposure. While our estimator targets the intervention-specific mean, the target estimand of the TWFE model is a weighted-average of all possible 2$\times$2 DID estimators \citep{goodman2021difference}, and is thus less immediately interpretable. Moreoever, the TWFE approach relies on additional assumptions for identification; in particular, that adjustment covariates $X_{st}$ are not affected by previous treatment. This may not be plausible: state-level control covariates such as  unemployment rate and per capita income are very likely affected by previous minimum wage changes. 

We incorporate data from the 50 U.S. states and Washington, D.C.. Between 2013 and 2019, $n = 3,111,090$ participants were surveyed. State-level confounding variables considered in our analysis include economic indices such as the maximum Temporary Assistance for Needy Families (TANF) benefit for a family of four, the maximum Supplemental Nutrition Assistance Program (SNAP) benefit for a family of four, the state Earned Income Tax Credit (EITC), the unemployment rate, and the per capita personal income as well as state-level demographic summaries of sex (proportion male), educational attainment (proportion graduated high school, attended college or technical school, and graduated from college or technical school), race (proportion non-hispanic white, non-hispanic black, and other), and age (proportion older than 64). State-level summaries were obtained by averaging the individual-level BRFSS data using survey design weighting. Figure \ref{fig:minimum-wage} presents the minimum wages for the 50 states and Washington DC over the period from 2013 to 2019. In 2013, when the NYS minimum wage was set to \$7.25, all states had minimum wages at least as high as NYS. By 2019, only 5 states (CA, DC, MA, VT, and WA) had maintained minimum wages at least as high as NYS throughout the study period. 

Estimating the target parameter required first estimating nuisance functions: (1) the outcome models $Q^{j,k,m}$ for $j = 2019, \ldots, 2014$, $k = j \text{ or } j+1$, and $m = 2013, \ldots,j$ and (2) the exposure propensity model $g_m$ for $m = 2013, \ldots, 2019$. Outcome models were estimated using SuperLearner, an ensemble learning approach which models the outcome as an optimally weighted combination of predictions from a set of candidate learners. Our candidate learners included variations of random forests and multivariate adaptive regression. For each type of learner for the outcome in year $j$, we incorporated a version which used all observed covariates up to year $j$, as well as versions which screened covariates based on having a significant correlation with the selected outcome or based on being selected from a LASSO regression. To avoid Donsker class restrictions, we used sample splitting based on $K = 2$ splits of the data. The cumulative propensity model, $g_m$, was constructed using the product of time-specific treatment propensities, $f(A_k = 1|\bar{W}_k, \bar{A}_{k-1} = \bar{1})$. These propensity models were also fit using SuperLearner, with candidate learners including random forests and elastic net regression. Note that treatment propensities only needed to be modelled at times when the number of ``consistent" states changed. For example, from 2013 to 2014, all 51 states were consistent and no estimation of propensity was required ($f(A_{2014} = 1|\overline{W}_{2014}, A_{2013} = 1) = 1$). Models for treatment propensity had to be fit for the years $2015,\ 2016,\ 2018$, and $2019$. Because the propensity score model conditions on having remained consistent with NYS up to year $t-1$, small sample sizes in the final years led to difficulties in estimating these models. For example, 8 states were consistent with NYS minimum wage in 2018 and, of these states, 6 remained consistent in 2019. In particular, when taking into account the sample splitting required for both avoiding Donsker class conditions and inherent in the optimization of SuperLearner model parameters were occasionally unidentifiable. To address this concern, we assume the same propensity score model for all years (i.e. $g_m = g$). In this simplified model, exposure in the following year is predicted based on observed covariates over the previous two years.

\begin{table}[b]
    \centering
    \small\begin{tabular}{cccccccccccccccc}
    \toprule
     & & \multicolumn{4}{c}{\underline{Observed}} & & \multicolumn{4}{c}{\underline{Counterfactual}} & & \multicolumn{4}{c}{\underline{Difference}} \\
     Year    & & Est. & SE & Lower & Upper & & Est. & SE & Lower & Upper & & Est. & SE & Lower & Upper \\
    \midrule
    \midrule
     2015    & & 52.0\% & 0.66 & 50.7\% & 53.3\% & & 51.6\% & 0.78 & 50.1\% & 53.2\% & & -0.40 & 0.35 & -1.09  & 0.28  \\
     2016    & & 51.5\% & 0.64 & 50.2\% & 52.8\% & & 51.7\% & 0.82 & 50.0\% & 53.3\% & & 0.16 & 0.44 & -0.70 & 1.01  \\
     2017    & & 50.3\% & 0.67 & 49.0\% & 51.6\% & & 50.7\% & 0.84 & 49.1\% & 52.4\% & & 0.42 & 0.50 & -0.55 & 1.40  \\
     2018    & & 50.5\% & 0.62 & 49.3\% & 51.7\% & & 50.5\% & 0.90 & 48.8\% & 52.3\% & & 0.01 & 0.56 & -1.08 & 1.10 \\
     2019    & & 49.7\% & 0.62 & 48.5\% & 50.9\% & & 50.4\% & 0.86 & 48.8\% & 52.1\% & & 0.75 & 0.64 & -0.51 & 2.00  \\
    \bottomrule
    \end{tabular}
    \caption{Estimated prevalence of ``excellent'' or ``very good'' self-rated health in the US, based on data from the 2013-2019 Behavioral Risk Factor Surveillance Study. Counterfactual estimates are based on our proposed estimators implemented with Super Learner.}
    \label{tab:data_results}
\end{table}

\begin{figure}[b]
    \centering
    \includegraphics[scale = 0.5]{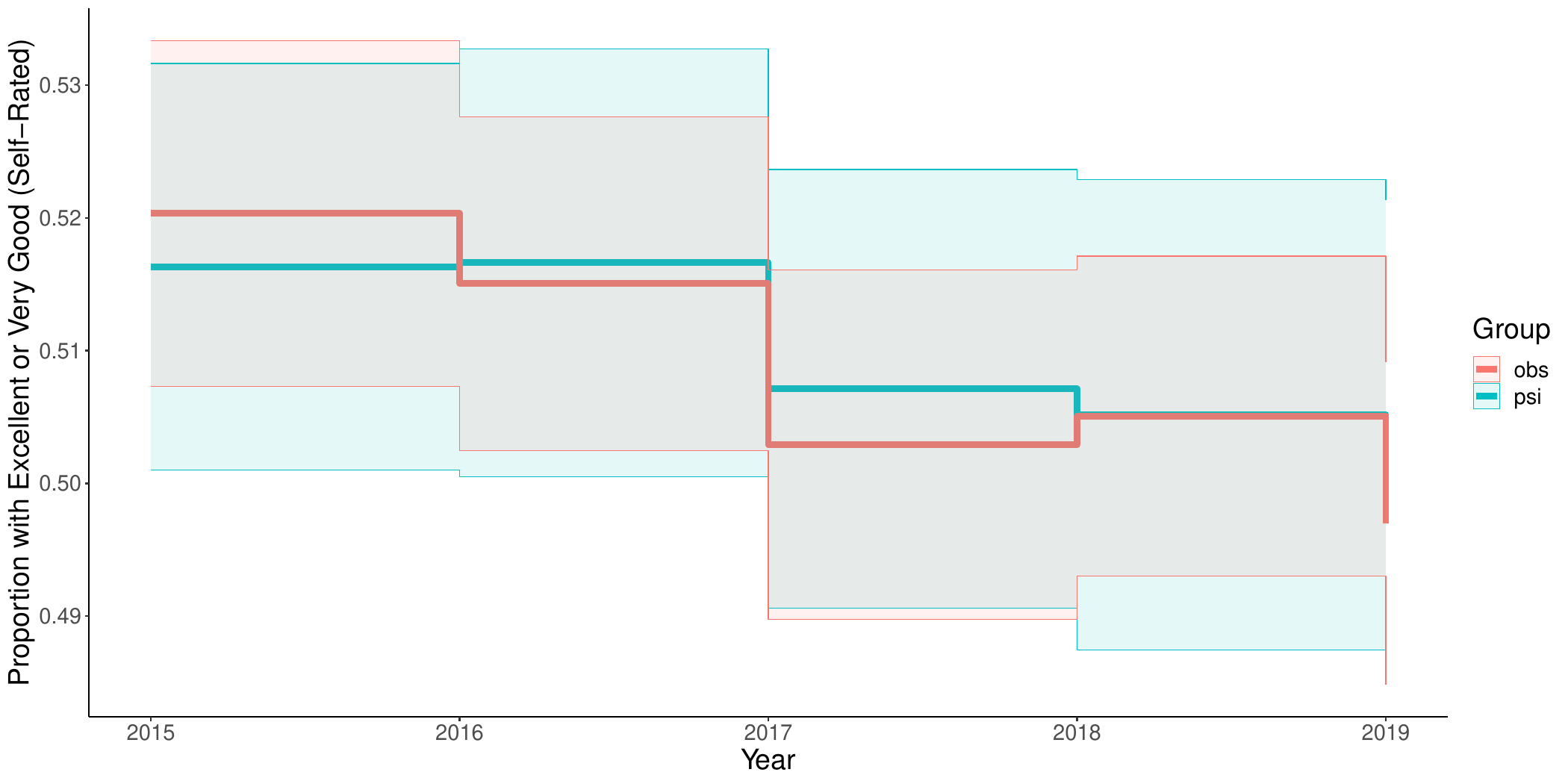}
    \caption{Estimated proportion of participants reporting "Excellent" or "Very Good" Health between 2015 and 2019. Proportions provided for the observed population and for a hypothetical population of states with minimum wage laws consistent with NYS. 95\% point-wise confidence intervals included.}
    \label{fig:evo2}
\end{figure}

The primary results of this analysis can be found in Table \ref{tab:data_results} and Figure \ref{fig:evo2}. Table \ref{tab:data_results} presents the estimated proportion (with associated standard errors) of individuals reporting ``excellent" or ``very good" among (1) the observed population, and (2) a hypothetical population in which the federal government had a minimum wage law consistent with that of NYS for each year between 2015 and 2019. Also provided are the estimated differences in proportions between the two populations under each year of study. Figure \ref{fig:evo2} presents a step function visualizing these results. Importantly, we do not observe sufficient evidence to claim that setting the federal minimum wage to that of NYS would have impacted the self-rated health of individuals during the years under consideration. The largest estimated difference in proportion of individuals with high-levels of self-reported health occured in the year 2019. In this year, the observed proportion of participants who reporting high-levels of health was 49.7\% (95\% CI: 48.5\%, 50.9 \%).

\begin{figure}[b]
    \centering
    \includegraphics[scale = 0.5]{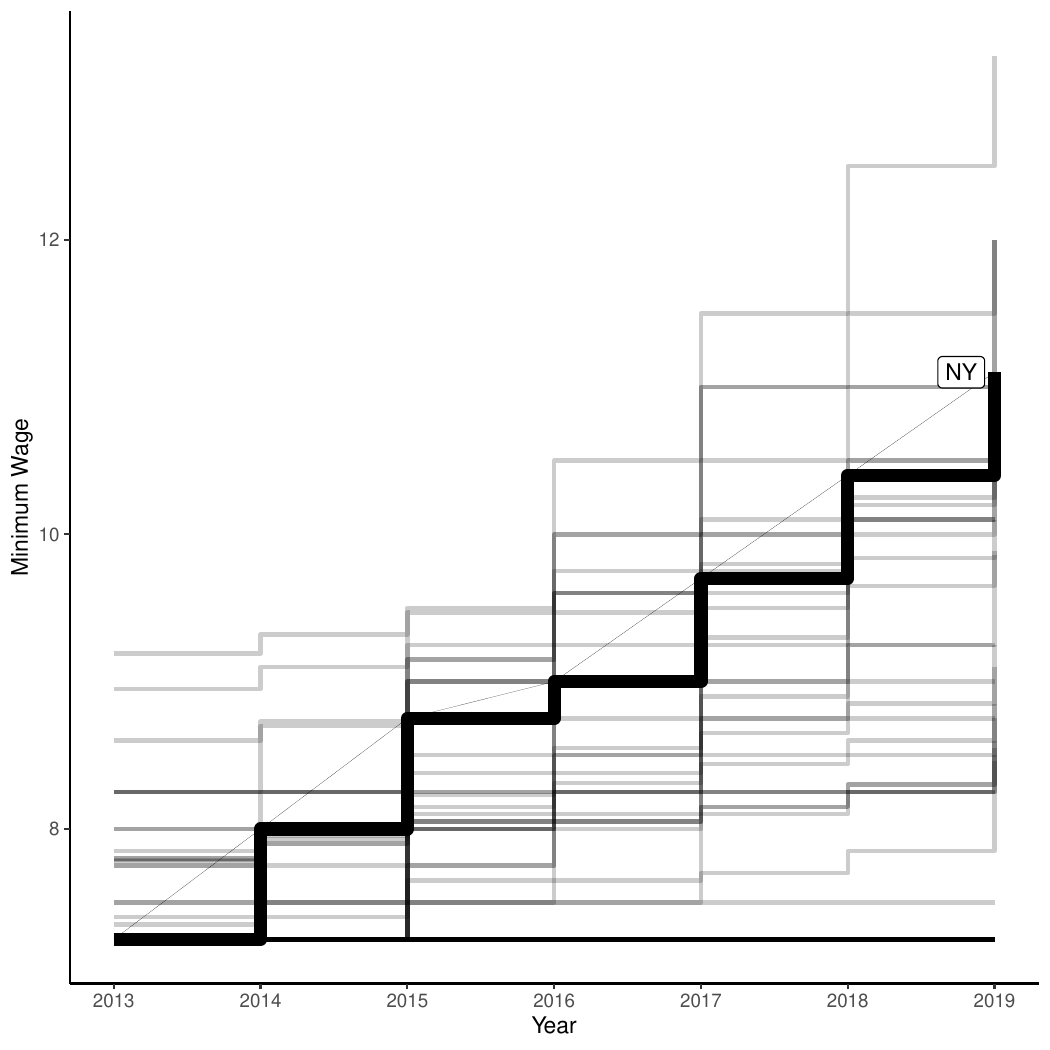}
    \caption{State-level minimum wage levels (\$USD) between 2013 and 2019. New York State minimum wage highlighted in bold.}
    \label{fig:minimum-wage}
\end{figure}



\section{Discussion}
This paper proposed novel machine learning-based estimators for longitudinal DID settings with time-varying treatments and/or covariates, focused on the intervention-specific mean. By being based on the efficient influence function, the proposed estimators are efficient and allow appropriate inference even when the nuisance functions are estimated using complex machine learning algorithms. In our simulations, the estimators achieved similar finite-sample precision when using machine learning as when correctly specified parametric models were used. Investigators rarely have knowledge about the functional form of nuisance models, especially when these are higher dimensional; thus interest in data adaptive estimators has been growing \citep{diaz2023nonparametric,chernozhukov2018double, chang2020double, shahn2022structural}. Our focus on the intervention-specific mean stems from the fact that these parameters can sometimes be highly policy relevant. For example, \citet{leifheit2021expiring} estimated counterfactual COVID-19 incidence rates if US-state-level eviction moratoria had not expired; these results were then cited in the CDC order calling for a federal eviction moratoria \citep{centers2021temporary}. We also are among the first DID papers to accomodate time-varying covariates; particularly in the likely scenario that these may be affected by prior treatment. For example, \citet{callaway2023policy} discuss economic outcomes of pandemic-related policies. These policy decisions are based on the current infection rates, which are in turn both influenced by prior policy and influence future economic outcomes. Such treatment-confounder, which we conjecture is operating in many contexts where DID is used, would invalidate most existing DID methods but is adequately addressed in our approach, so long as it is consistent with the parallel trends assumption (the simulation setup in Section \ref{sec:sim} provides such an examples). 
\par In contrast to exchangeability assumptions, the parallel trends assumption is semiparametric and cannot be fully justified using causal diagrams; thus it can be difficult to justify a priori. An active area of research aims to describe, outside of any parametric model, what conditional independencies and/or functional form restrictions in a structural model imply and/or are implied by parallel trends \citep{ghanem2022selection, marx2024parallel}. In parametric models, parallel trends generally requires that any unmeasured confounders are time-invariant in both their distribution and their association with outcomes over time. Thus, we may reasonably expect that a semiparametric structural model consistent with parallel trends would need to be partially linear with respect to the unmeasured covariates, though there is more to be explored here. It is important to note the scale-dependence of parallel trends; the version presented in this paper is linear (as typical in DID), but the assumption could be expressed on a nonlinear scale as well. A linear scale may not be reasonable for many kinds of outcomes, such as survival times or other distributions expected to be skewed. 
\par Regarding estimation in practice, choosing the machine learning library and/or the number of partitions in cross-fitting may be challenging. In general, selecting a larger number of partitions reduces bias by ensuring nuisance models are fit on a larger portion of the data, but can increase computation time.  Though our approach using a one-step estimator and (single-) cross-fitting approach guarantees root-n consistency under the stated assumptions, finite sample bias may be further reduced by applying other approaches. Recent work by \citet{zivich2021machine} describe a ``double cross-fitting'' approach that eliminates second-order bias \citep{newey2018cross}, wherein the data are repeatedly partitioned into three components: (1) a training set for the outcome regression models, (2) a training set for the exposure propensity model, and (3) an treatment effect evaluation set. To our knowledge, this approach has not been extended to time-varying settings such as ours. It may also be useful to develop target maximum likelihood estimators \citet{van2006targeted, van2012targeted} for this setting, as the proposed approach does not yield a substitution estimator and so could produce estimates outside the known parameter range (e.g. risks outside of $[0,1]$), possibly contributing to finite sample bias.

\printbibliography

\appendix

\section{Proof of Theorem 1}
By the linearity property of influence functions, it can be seen that:
\begin{equation*}
    IF(\psi_t) = IF(\E[Y_0]) + \sum_{k=1}^t (IF(\phi_{k,k}) - IF(\phi_{k-1,k})).
\end{equation*}
Finding the influence function for $\psi_t$ simplifies to identifying the influence function for $\phi_{j,k}$. By \citet{van2012targeted} this is given by:
\begin{equation*}
    IF(\phi_{j,k}) = \sum_{m=1}^k \frac{I(\overline{A}_m = \overline{a}_m^*)}{g_m(\overline W_m)}(Q^{j,k,m+1}-Q^{j,k,m}) + Q^{j,k,1} - \phi_{j,k} 
\end{equation*}
To see why this is true, consider the following:
\begin{align*}
    IF(\phi_{j,k}) &= IF(E[Q^{j,k,1}]) \\
    &= IF\bigg(\int Q^{j,k,1} dF(\overline w_1) \bigg) \\
    &= \int IF(Q^{j,k,1})dF(\overline w_1) + \int Q^{j,k,1} IF(dF(\overline w_1))
\end{align*}
Focus attention on the second term:
\begin{align*}
\int Q^{j,k,1} IF(dF(\overline w_1)) &= \int Q^{j,k,1}IF(f(\overline w_1))d\overline w_1 \\
&= \int Q^{j,k,1}(I(\overline W_1 = \overline w_1) - f(\overline w_1))d\overline w_1 \\
&= \int Q^{j,k,1}I(\overline W_1 = \overline w_1)d\overline w_1 - \int Q^{j,k,1}f(\overline w_1)d\overline w_1 \\
&= Q^{j,k,1} - E[Q^{j,k,1}] \\
&= Q^{j,k,1} - \phi_{j,k}.
\end{align*}
Next, we will show that for $m = 0,...,k$ the following relationship holds:
\small\begin{align}
    \int IF(Q^{j,k,m}) \prod_{s=0}^m dF(w_s|\overline{W}_{s-1} =  \overline w_{s-1}, \overline{A}_{s-1} = \overline{a}_{s-1}^*) &= \int IF(Q^{j,k,m+1}) \prod_{s=0}^{m+1} dF(w_s|\overline{W}_{s-1} = \overline w_{s-1}, \overline{A}_{s-1} = \overline{a}_{s-1}^*) \nonumber \\ 
    &\qquad + \frac{I(\overline{A}_m = \overline{a}_m^*)}{g_m(\overline w_m)}(Q^{j,k,m+1} - Q^{j,k,m}) 
\end{align}
Consider the following:
\begin{align*}
    \int IF(Q^{j,k,m}) &\prod_{s=0}^m dF(w_s|\overline{W}_{s-1} =  w_{s-1}, \overline{A}_{s-1} = \overline{a}_{s-1}^*) \\
    &= \int IF\bigg\{ \int Q^{j,k,m+1} dF(w_{m+1}|\overline{w}_{m}, \overline{a}^*_{m}) \bigg\} \prod_{s=0}^m dF(w_s|\overline{W}_{s-1} =  w_{s-1}, \overline{A}_{s-1} = \overline{a}_{s-1}^*) \\
    &= \int \int IF\bigg\{  Q^{j,k,m+1}\bigg\} \prod_{s=0}^{m+1} dF(w_s|\overline{W}_{s-1} =  w_{s-1}, \overline{A}_{s-1} = \overline{a}_{s-1}^*) \\
    &\qquad + \int  \int Q^{j,k,m+1} IF\bigg\{dF(w_{m+1}|\overline{w}_{m}, \overline{a}^*_{m}) \bigg\} \prod_{s=0}^m dF(w_s|\overline{W}_{s-1} =  w_{s-1}, \overline{A}_{s-1} = \overline{a}_{s-1}^*)
\end{align*}
\normalsize Focus on the second term:
\tiny\begin{align*}
    \int  \int Q^{j,k,m+1} &IF\bigg\{dF(w_{m+1}|\overline{w}_{m}, \overline{a}^*_{m}) \bigg\} \prod_{s=0}^m dF(w_s|\overline{W}_{s-1} =  w_{s-1}, \overline{A}_{s-1} = \overline{a}_{s-1}^*) \\
    &= \int  \int Q^{j,k,m+1} \bigg\{\frac{I{(\overline{A}_{m} = \overline{a}_{m}^*, \overline{W}_m = \overline{w}_m)}}{g_m(\overline w_m)\prod_{l=0}^{m}f(w_l|\overline{w}_{l-1},\overline{a}_{l-1}^*)} (I(W_{m+1} = w_{m+1}) - f(w_{m+1}|\overline{w}_{m},\overline{a}_{m}^*))dw_{m+1}\bigg\} \\
    &\qquad \times \prod_{s=0}^m dF(w_s|\overline{W}_{s-1} =  w_{s-1}, \overline{A}_{s-1} = \overline{a}_{s-1}^*) \\
    &= \int  \int Q^{j,k,m+1} \bigg\{\frac{I{(\overline{A}_{m} = \overline{a}_{m}^*, \overline{W}_m = \overline{w}_m)}}{g_m(\overline w_m)\prod_{l=0}^{m}f(w_l|\overline{w}_{l-1},\overline{a}_{l-1}^*)}I(W_{m+1} = w_{m+1})dw_{m+1}\bigg\}\prod_{s=0}^m dF(w_s|\overline{W}_{s-1} =  w_{s-1}, \overline{A}_{s-1} = \overline{a}_{s-1}^*) \\
    &\qquad - \int  \int Q^{j,k,m+1} \bigg\{\frac{I{(\overline{A}_{m} = \overline{a}_{m}^*, \overline{W}_m = \overline{w}_m)}}{g_m(\overline w_m)\prod_{l=0}^{m}f(w_l|\overline{w}_{l-1},\overline{a}_{l-1}^*)} f(w_{m+1}|\overline{w}_{m},\overline{a}_{m}^*)dw_{m+1}\bigg\}\prod_{s=0}^m dF(w_s|\overline{W}_{s-1} =  w_{s-1}, \overline{A}_{s-1} = \overline{a}_{s-1}^*) \\
    &= \int \frac{I{(\overline{A}_{m} = \overline{a}_{m}^*, \overline{W}_m = \overline{w}_m)}}{g_m(\overline w_m)\prod_{l=0}^{m}f(w_l|\overline{w}_{l-1},\overline{a}_{l-1}^*)}\bigg\{\int Q^{j,k,m+1}I(W_{m+1} = w_{m+1})dw_{m+1}\bigg\}\prod_{s=0}^m dF(w_s|\overline{W}_{s-1} =  w_{s-1}, \overline{A}_{s-1} = \overline{a}_{s-1}^*) \\
    &\qquad - \int \frac{I{(\overline{A}_{m} = \overline{a}_{m}^*, \overline{W}_m = \overline{w}_m)}}{g_m(\overline w_m)\prod_{l=0}^{m}f(w_l|\overline{w}_{l-1},\overline{a}_{l-1}^*)} \bigg\{\int Q^{j,k,m+1} f(w_{m+1}|\overline{w}_{m},\overline{a}_{m}^*)dw_{m+1}\bigg\}\prod_{s=0}^m dF(w_s|\overline{W}_{s-1} =  w_{s-1}, \overline{A}_{s-1} = \overline{a}_{s-1}^*) \\
    &= \int \frac{I{(\overline{A}_{m} = \overline{a}_{m}^*, \overline{W}_m = \overline{w}_m)}}{g_m(\overline w_m)\prod_{l=0}^{m}f(w_l|\overline{w}_{l-1},\overline{a}_{l-1}^*)}Q^{j,k,m+1}(\overline{a}^*, \overline{W}_{m+1} = (W_{m+1}, \overline{w}_m))\prod_{s=0}^m dF(w_s|\overline{W}_{s-1} =  w_{s-1}, \overline{A}_{s-1} = \overline{a}_{s-1}^*) \\
    &\qquad - \int \frac{I{(\overline{A}_{m} = \overline{a}_{m}^*, \overline{W}_m = \overline{w}_m)}}{g_m(\overline w_m)\prod_{l=0}^{m}f(w_l|\overline{w}_{l-1},\overline{a}_{l-1}^*)} Q^{j,k,m}(\overline{a}^*, \overline{w}_m)\prod_{s=0}^m dF(w_s|\overline{W}_{s-1} =  w_{s-1}, \overline{A}_{s-1} = \overline{a}_{s-1}^*) \\
    &= \frac{I(\overline{A}_{m} = \overline{a}_{m}^*)}{g_m(\overline{W}_m)}(Q^{j,k,m+1}  - Q^{j,k,m} )
\end{align*}
\normalsize
This implies the overall influence function because:
\begin{align*}
IF(\phi_{j,k}) &= \int IF(Q^{j,k,1})dF(\overline w_1) + \int Q^{j,k,1} IF(dF(\overline w_1)) \\
&= \int IF(Q^{j,k,k+1}) \prod_{s=0}^{k+1} dF(w_s|\overline{W}_{s-1} = w_{s-1}, \overline{A}_{s-1} = \overline{a}_{s-1}^*) \\
&\qquad + \sum_{m=1}^k \frac{I(\overline{A}_m = \overline{a}_m^*)}{g_m(\overline W_m)}(Q^{j,k,m+1} - Q^{j,k,m}) \\
&\qquad + Q^{j,k,1} - \phi_{j,k} \\
&= \sum_{m=1}^k \frac{I(\overline{A}_m = \overline{a}_m^*)}{g_m(\overline W_m)}(Q^{j,k,m+1} - Q^{j,k,m}) + Q^{j,k,1} - \phi_{j,k}
\end{align*}
Where the last equality follows from the fact that $IF(Q^{j,k,k+1}) = IF(Y_j) = 0$

\section{Proof of Theorem 2}

    By Lemma 1 of \citet{luedtke2017sequential} we have 
    \[\widetilde{\phi}_{jk} - \phi_{jk} = (P_n - P)IF(\phi_{jk}) + (P_n-P)(\widehat{IF}(\phi_{jk}) - IF(\phi_{jk})) + Rem_{jk},\]
    where $\widehat{IF}(\phi_{jk})$ denotes the IF of Theorem 1 with $\Theta$ replaced by $\widehat\Theta$, and where 
    \[Rem_{jk} = \sum_{m=1}^k \int c_m(\overline w_m) \{g_m(\overline w_m) - \widehat{g}_m(\overline w_m)\}\{Q^{j,k,m}(\overline w_m) - \widehat{Q}^{j,k,m}(\overline w_m)\}dP(\overline w_m),\]
    where $c_m(\overline W_m)$ is a constant given in the original Lemma, which is bounded in probability according to C2. Under C1, Lemma 19.24 of \citet{van2000asymptotic} shows that the term $(P_n-P)(\widehat{IF}(\phi_{jk}) - IF(\phi_{jk}))$ is $o_P(n^{-1/2})$. Likewise, using the Cauchy-Schwarz inequality we have
    \[|Rem_{jk}| \leq \norm{g_m - \widehat{g}_m}\norm{Q^{j,k,m} - \widehat{Q}^{j,k,m}}.\]
    Applying the above to 
   \[\psi_t=\E[Y_0] + \sum_{k=1}^t (\phi_{k,k} - \phi_{k-1, k} )
     ;\quad \widetilde\psi_t=P_n Y_0 + \sum_{k=1}^t (\widetilde\phi_{k,k} - \widetilde\phi_{k-1, k} )
     \]
     together with Theorem 1 yields the result. 

\end{document}